\NewDocumentCommand{\evalat}{sO{\big}mm}{%
  \IfBooleanTF{#1}
   {\mleft. #3 \mright|_{#4}}
   {#3#2|_{#4}}%
}
\newtheorem{theorem}{Theorem}
\newtheorem{lemma}{Lemma}
\newtheorem{remark}{Remark}
\def\blfootnote{\xdef\@thefnmark{}\@footnotetext}
\begin{document}
\title{{Physical Layer Security of Large Reflecting Surface Aided Communications with Phase Errors}}
\author{Jos\'e David Vega S\'anchez, Pablo Ram\'irez-Espinosa and F. Javier L\'opez-Mart\'inez}

\maketitle

\blfootnote{\noindent Manuscript received MONTH xx, YEAR; revised XXX. The review of this paper was coordinated by XXXX.  The work of J.~D.~Vega~S\'anchez was funded by the Escuela Polit\'ecnica Nacional, for the development of the project PIGR-19-06 and through a teaching assistant fellowship for doctoral studies. The work of F.J. Lopez-Martinez was funded by the Spanish Government and the European Fund for Regional Development FEDER (project TEC2017-87913-R) and by Junta de Andalucia (project P18-RT-3175, TETRA5G).}

\blfootnote{\noindent J.~D.~Vega~S\'anchez is with Departamento de Electr\'onica, Telecomunicaciones y Redes de Informaci\'on, Escuela Polit\'ecnica Nacional (EPN),
Quito,  170525, Ecuador. (e-mail: $\rm jose.vega01@epn.edu.ec$).}

\blfootnote{\noindent P. Ram\'irez-Espinosa is with the Connectivity Section,  Department of Electronic Systems, Aalborg University, Aalborg {\O}st 9220, Denmark (e-mail: $\rm pres@es.aau.dk$).}

\blfootnote{\noindent F.~J. Lopez-Martinez is with Departamento de Ingenieria de Comunicaciones, Universidad de Malaga - Campus de Excelencia Internacional Andalucia Tech., Malaga 29071, Spain (e-mail: $\rm fjlopezm@ic.uma.es$).}

\blfootnote{This work has been submitted to the IEEE for publication. Copyright may be transferred without notice, after which this version may no longer be accessible.} 
\begin{abstract}
The \ac{PLS} performance of a wireless communication link through a \ac{LRS} with phase errors is analyzed. Leveraging recent results that express the \ac{LRS}-based composite channel as an equivalent scalar fading channel, we show that the eavesdropper's link is Rayleigh distributed and independent of the legitimate link. The different scaling laws of the legitimate and eavesdroppers signal-to-noise ratios with the number of reflecting elements, and the reasonably good performance even in the case of coarse phase quantization, show the great potential of \ac{LRS}-aided communications to enhance \ac{PLS} in practical wireless set-ups.
\end{abstract}

\begin{IEEEkeywords}
Fading channels, large reflecting surfaces, phase errors, physical layer security, wireless communications.
\end{IEEEkeywords}
%\clearpage

\vspace{-2mm}
%%%%%%%%%%%%%%%%%%%%%%%%%%%%%%%
% Introduction
%%%%%%%%%%%%%%%%%%%%%%%%%%%%%%%
\section{Introduction}
Recently, large reflecting surfaces (LRSs) have been proposed as a new paradigm to noticeably improve the performance of emerging networks in terms of system performance and energy-efficiency. An LRS consists of a large number of low-cost passive reflecting units, where each element can adaptively adjust the amplitude reflection and/or the phase shift of the incident signals \cite{Wu}. These smart passive devices can be integrated into the infrastructure of future wireless networks to control the radio propagation environment.%, improving or weakening the reflected signal in different receivers. %Furthermore, unlike traditional active relays, an LRS consumes much less energy, since it simply reflects signals without injecting any energy in the phase shift reconfiguration process \cite{QWu}. 

On the other hand, physical layer security (PLS) has drawn full attention for ensuring secure wireless communications in a low complexity manner. Specifically, PLS intelligently exploits the inherent randomness of the wireless medium to protect the information in the physical layer \cite{diana}. From an information-theoretic perspective, LRS is a new approach to improve the PLS performance by reconfiguring the wireless environment for the benefit of the legitimate user. In this sense, several researchers have addressed their efforts to investigate PLS on LRS-aided wireless communications systems. For instance, the secrecy performance for LRS-aided multi-antenna communications was studied in \cite{Cui,Schober,Pei}. Because of the rather complex nature of the \ac{LRS} composite fading model, the analytical characterization of \ac{PLS} performance metrics is utterly unfeasible and most works often resort to optimization techniques to maximize the secrecy rates.

%In such works (and references therein), the authors maximize the secrecy rate by jointly optimizing the beamformers at the transmitter and the reflecting coefficients at the LRS. 
%(omitted
%here for the lack of space)

In this paper, we investigate the performance of an \ac{LRS}-aided communication system with imperfect phase compensation in terms of its PLS performance. We leverage the recent formulation of the \ac{LRS} composite fading channel as an equivalent scalar channel \cite{Badiu2020} to gain an understanding of the potential of \ac{LRS}-based communications for \ac{PLS}. The key contributions of this paper are: first, we show that the distribution of the eavesdropper's equivalent scalar fading channel is Rayleigh distributed and its average \ac{SNR} scales with $n$, while the average \ac{SNR} at the legitimate receiver scales with $n^2$. We also prove that despite the equivalent channels at both receivers share a number of components, they are statistically independent under some mild conditions. Finally, we exemplify the limitations of the equivalent scalar channel approximations for conventional asymptotic high-\ac{SNR} analyses, which should be interpreted with caution for outage-based performance metrics.
\vspace{-2mm}
%%%%%%%%%%%%%%%%%%%%%%%%%%%%%%%
% System Model
%%%%%%%%%%%%%%%%%%%%%%%%%%%%%%%

%%%%%%%%%%%%%%%%%%%%%%%%%%%%%%%
% System Model
%%%%%%%%%%%%%%%%%%%%%%%%%%%%%%%
\section{System Model}
We consider an LRS-assisted wireless communication set-up consisting of one source node Alice ($\mathrm{A}$), one legitimate node Bob ($\mathrm{B}$), one eavesdropper Eve ($\mathrm{E}$), and an LRS, which assists the communication between the legitimate nodes. In the system, the direct link is neglected, and all terminals are assumed to be equipped with a single antenna, while the LRS has $n$ low-cost passive reflecting elements $R_1 \dots R_n$. We denote as $H_{i,1}$ the fading channel coefficient between the source $\mathrm{A}$ and the reflecting element $R_i$, whereas $H_{i,{\rm b}}$ and $H_{i,{\rm e}}$ are the fading channel coefficients between $R_i$ and the legitimate receiver $\mathrm{B}$ and the eavesdropper $\mathrm{E}$, respectively. Without loss of generality, we consider normalized fading coefficients with unitary power, and the corresponding average magnitudes are given $\forall i=1\ldots n$ by $a_1=\mathbb{E}\{|H_{i,1}|\}$, $a_{2,\rm b}=\mathbb{E}\{|H_{i,{\rm b}}|\}$ and $a_{2,\rm e}=\mathbb{E}\{|H_{i,{\rm e}}|\}$. We note that $\{a_1,a_{2,\rm b},a_{2,\rm e}\}\leq1$ in all instances, 
where the equality only holds in the limit of a deterministic fading channel, i.e., in the absence of fading. For the sake of compactness, $a_{\rm b}=\sqrt{a_1 a_{2,\rm b}}$ and $a_{\rm e}=\sqrt{a_1 a_{2,\rm e}}$ are defined. The received signal at $\mathrm{B}$ can be expressed as 
\begin{equation}
\label{eqY}
Y_{\rm b}=\sqrt{P_T L_{\rm b}} \sum_{i=1}^{n} H_{i,1} e^{j \phi_{i}} H_{i,{\rm b}} X+W_b,
\end{equation}
where $X$ is the transmitted symbol, $P_T$ indicates the transmit power at A, $L_{\rm b}$ encompasses the path losses for the A-R and R-B links, the antenna gains and reflection losses, and $W_b$ is the \ac{AWGN} term with $N_0$ power. Now, the \ac{LRS} designs the phase shifts for each element $\phi_i$ so that all phase contributions due to $\angle H_{i,1}$ and $\angle H_{i,{\rm b}}$ are compensated. However, the imperfect phase estimation and the limited quantization of phase states at the \ac{LRS} causes that a residual random phase error $\Theta_i$ still persists \cite{Badiu2020}, i.e., $\phi_i=-\angle H_{i,1}-\angle H_{i,{\rm b}}+\Theta_i$. The equivalent complex channel observed by the legitimate receiver can hence be expressed as
\begin{equation}
\label{eq2}
H_{\rm b}=\frac{1}{n} \sum_{i=1}^{n}\left|H_{i,1} \| H_{i,{\rm b}}\right| e^{j \Theta_{i}},
\end{equation}
and \eqref{eqY} is reformulated as:
\begin{equation}
Y_{\rm b}=n \sqrt{P_T L_{\rm b}} H_b X+W_b
\end{equation}
Now, the received signal at $\mathrm{E}$ can be expressed as
\begin{equation}
Y_{\rm e}=\sqrt{P_T L_{\rm e}} \sum_{i=1}^{n} H_{i,1} e^{j \phi_{i}} H_{i,{\rm e}} X+W_e,
\end{equation}
where the $L_{\rm e}$ and $W_{\rm e}$ are defined in a similar way as $L_{\rm b}$ and $W_{\rm b}$. Because the phase shifts $\phi_{i}$ are designed to compensate for the effect of the fading channel coefficients of the legitimate link, the residual phase errors $\Psi_i$ affecting the eavesdropper link will be much larger than the legitimate counterpart and, whenever $\angle H_{i,{\rm e}}\sim\mathcal{U}[-\pi,\pi)$, then $\Psi_{i}\sim\mathcal{U}[-\pi,\pi)$ \cite{Scire1968} regardless of the phase distribution of $\angle H_{i,1}$. We can define the equivalent complex channel observed by the eavesdropper as
\begin{equation}
\label{eq5}
H_{\rm e}=\frac{1}{n} \sum_{i=1}^{n}\left|H_{i,1} \| H_{i,{\rm e}}\right| e^{j \Psi_{i}},
\end{equation}
that yields
%Even though the legitimate and eavesdropper's links share the A-R links, we will later see that under some premises $H_e$ and $H_b$ are independent.
\begin{equation}
Y_{\rm e}=n \sqrt{P_T L_{\rm e}} H_e X+W_e
\end{equation}
With the previous definitions, the instantaneous \ac{SNR} at the legitimate and eavesdropper's links are given by
\begin{equation}
\gamma_{\rm b}=n^2 \gamma_{0,{\rm b}} |H_{\rm b}|^2,
\end{equation}
\begin{equation}
\gamma_{\rm e}=n^2 \gamma_{0,e} |H_{\rm e}|^2,
\end{equation}
%
%{\color{red} Definition of OPSC and ASC}\\
%
where we defined $\gamma_{0,{\rm b}}=P_T L_{\rm b}/N_0$
and $\gamma_{0,{\rm e}}=P_T L_{\rm e}/N_0$ as the average \ac{SNR}s at the legitimate and eavesdropper's sides in the case of a single reflector \ac{LRS} (i.e., $n=1$).

We aim to determine the system performance in terms of its secrecy capacity $C_\mathrm{S}$ defined as~\cite{Wyner}
\begin{align}\label{eqCS}
C_\mathrm{S}&=\!\text{max}\left \{C_{\rm b}-C_{\rm e},0  \right \},
\end{align}
where $C_{\rm b}=\log_2(1+\gamma_{\rm b})$ and $C_{\rm e}=\log_2(1+\gamma_{\rm e})$ are the capacities of the main and eavesdropper channels, respectively. We first consider a passive eavesdropper for which Alice does not have \ac{CSI} knowledge. Under this premise, Alice can only transmit at a constant secrecy rate $R_{\mathrm{S}}$ and security will be compromised whenever $R_{\mathrm{S}}$ exceeds $C_{\mathrm{S}}$. The secrecy outage probability (SOP) is formulated as the probability that the instantaneous $C_\mathrm{S}$ falls below such rate $R_{\mathrm{S}}$, i.e., $ \text{P}=\Pr\left \{ C_\mathrm{S}< R_{\mathrm{S}}  \right \}$ as
\begin{align}\label{eqSOP}
 \text{P}&=\int_{0}^{\infty}F_{\gamma_{\rm b} }\left ( \tau \gamma_{\rm e}+\tau-1 \right )f_{\gamma_{\rm e} }(\gamma_{\rm e} )d\gamma_{\rm e},
\end{align}
where $\tau\buildrel \Delta \over  = 2^{R_{\mathrm{S}}}$. We also study the active eavesdropping case, in which the CSI of both the main and the eavesdropper channels is available at Alice. Therefore, Alice can use such information to adapt her rate. In this setup, the average secrecy capacity (ASC) is the usual metric to evaluate the secrecy performance. According to~\cite[Proposition 3]{Moualeu}, the ASC can be defined as  
\begin{align}\label{eq22}
\overline{C}_\mathrm{S}=\overline{C}_{\rm b}-\mathcal{L}\left ( \overline{\gamma}_{\rm b}, \overline{\gamma}_{\rm e}\right ), 
\end{align}
where $\overline{C}_{\rm b}$ is the average capacity of the legitimate link %, i.e.,
%\begin{align}\label{eq23}
%\overline{C}_{\rm b}=\frac{1}{\ln 2}\int_{0}^{\infty}\frac{1-F_{\gamma_{{\rm b} }}(\gamma_{\rm e} )}{1+\gamma_{\rm e} }d\gamma_{\rm e}, 
%\end{align}
 and $\mathcal{L}\left ( \overline{\gamma}_{\rm b}, \overline{\gamma}_{\rm e}\right )$ can be interpreted as an ASC loss, defined as
\begin{align}\label{eq24}
\mathcal{L}\left ( \overline{\gamma}_{\rm b}, \overline{\gamma}_{\rm e}\right )=\tfrac{1}{\ln 2}\int_{0}^{\infty}\tfrac{\left(1-F_{\gamma_{{\rm e} }}(\gamma_{\rm e} ) \right)\left( 1-F_{\gamma_{{\rm b} }}(\gamma_{\rm e} )\right)}{1+\gamma_{\rm e} }d\gamma_{\rm e}\geq 0.
\end{align}

%%%%%%%%%%%%%%%%%%%%%%%%%%%%%%%
% Results
%%%%%%%%%%%%%%%%%%%%%%%%%%%%%%%
\section{\ac{SNR} distributions}
\subsection{Distribution of $\gamma_{\rm b}$}
For sufficiently large $n$, \cite{Badiu2020} proved that the distribution of $H_{\rm b}$ is that of a non-circularly symmetric complex Gaussian \ac{RV} with $U_{\rm b}=\Re(H_{\rm b})$ and $V_{\rm b}=\Im(H_{\rm b})$, so that $U_{\rm b} \sim \mathcal{N}\left(\mu, \sigma_{U_{\rm b}}^{2}\right)$ and $V_{\rm b} \sim \mathcal{N}\left(0, \sigma_{V_{\rm b}}^{2}\right)$, where the parameters of $\mu =\varphi_{1} a_{\rm b}^{2}$, $\sigma_{U_{\rm b}}^{2} =\frac{1}{2 n}\left(1+\varphi_{2}-2 \varphi_{1}^{2} a_{\rm b}^{4}\right)$ and $\sigma_{V_{\rm b}}^{2} =\frac{1}{2 n}\left(1-\varphi_{2}\right)$, and $\varphi_{j}$ are the $j^{\rm th}$ circular moments of $\Theta_{i}$. This implies that $R_{\rm b}=|H_{\rm b}|$ follows the Beckmann distribution \cite{Beckmann1962} and hence, the average \ac{SNR} at the legitimate receiver $\gamma_{\rm b}$ follows a (squared) Beckmann distribution %\footnote{For the sake of simplicity, and similarly to the related literature \cite{Pena2017}, we will refer to the distributions of the amplitudes and squared amplitudes associated to a non-circularly symmetric complex Gaussian \ac{RV} as Beckmann distribution in both cases.} % given by
%
%\begin{equation}
%f_{\gamma_{\rm b}}(\gamma)=\frac{c}{4 \pi q} \int_{0}^{2 \pi} e^{-\frac{c\left(\sqrt{\gamma} \cos (\theta)-\sqrt{\frac{K\overline\gamma_{\rm b}}{K+1}}\right)^{2}}{2 q^2}-\frac{ c g \sin^2 (\theta)}{2}} d \theta
%\end{equation}
%where $c=(K+1)(q^2+1)/\overline\gamma_{\rm b}$ and 
which is fully characterized by the following set of parameters $K=\mu^2/(\sigma_{U_{\rm b}}^{2} +\sigma_{V_{\rm b}}^{2} )$, $q=\sigma_{U_{\rm b}} /\sigma_{V_{\rm b}}$ and $\overline\gamma_{\rm b}=\mathbb{E}\{\gamma_{\rm b}\}$. %are used for convenience of discussion. 
 We note that the parameters $K$ and $q$ have a similar definition as those of the Rician and Hoyt \cite{Hoyt1947} distributions, respectively. In the scenario under consideration, we have that
\begin{align}
\label{eqK}
K&=n \frac{\varphi_1^2 a_{\rm b}^4}{1-\varphi_1^2 a_{\rm b}^4},\\
q&=\sqrt{\frac{1+\varphi_{2}-2 \varphi_{1}^{2} a_{\rm b}^{4}}{1-\varphi_{2}}},\\
\overline\gamma_{\rm b}&=n^2 \gamma_{0,{\rm b}}\left[\varphi_1^2 a_{\rm b}^4+\frac{1}{n}\left(1-\varphi_1^2 a_{\rm b}^4\right)\right].
\end{align}

As stated in \cite{Badiu2020}, the average \ac{SNR} scales with $n^2$. We also observe that the \ac{LOS} condition of the equivalent scalar channel grows, captured by $K$, grows with $n$. Notably, the non-circular symmetry caused by the phase errors captured by $q\in[1,\infty)$ is independent of the number of elements of the \ac{LRS}. We note that in the absence of phase errors, then $H_{\rm b}$ becomes a real Gaussian \ac{RV} and hence $|H_{\rm b}|$ follows a folded normal (FN) distribution \cite{Reig2019} with parameter $K$ given by \eqref{eqK} with $\varphi_1=1$, and for which the PDF and CDF have a simple closed-form expression.

The distribution of $R_{\rm b}$ is well approximated by a Nakagami-$m$ distribution in \cite{Badiu2020}, and hence $\gamma_{\rm b}$ can be approximated by a gamma distribution with shape parameter $m=\frac{n}{2} \frac{\varphi_{1}^{2} a_{\rm b}^{4}}{1+\varphi_{2}-2 \varphi_{1}^{2} a_{\rm b}^{4}}$ and scale parameter $\overline\gamma_{\rm b}=n^2 \gamma_{0,{\rm b}}\varphi_1^2 a_{\rm b}^4$. Similarly to $K$, $m$ also scales with $n$, which is in coherence with the conventional approximation of the Rician distribution by a Nakagami-$m$ distribution \cite{Simon2005} -- only that in our case, we are approximating a generalization of the Rician distribution by a Nakagami-$m$ distribution. Because of the rather dissimilar behavior of the FN, the Beckmann and the Nakagami-$m$ distributions in terms of diversity order \cite{Wang2003}, we will consider all such distributions in the derivation of the \ac{PLS} performance metrics, in order to obtain insights on when these distributions are useful to approximate the true distribution of $\gamma_{\rm b}$.
%\begin{remark}[Legitimate \ac{SNR}]
%Similar
%\end{remark}
\vspace{-3mm}
\subsection{Distribution of $\gamma_{\rm e}$}
When the \ac{LRS} designs its phase shifts according to the legitimate link, the resulting phase distributions for each of the eavesdropper's R-E links $\Psi_i$ are uniformly distributed by virtue of \cite{Scire1968}. This implies that the distribution of $R_{\rm e}=|H_{\rm e}|$ is Rayleigh distributed according to \cite[Corol. 2]{Badiu2020} with variance $\mathbb{E}\{R_{\rm e}^2\}=1/n$. Hence, $\gamma_{\rm e}$ is exponentially distributed with $\overline\gamma_{\rm e}=n\gamma_{0,{\rm e}}$. 

\begin{remark}[Scaling law for $\overline\gamma_{\rm e}$]
Notably, the average \ac{SNR} at the eavesdropper scales with $n$, whereas the average \ac{SNR} at the legitimate receiver scales with $n^2$. Hence, the scaling law for the ratio of legitimate and wiretap \ac{SNR}s is
\begin{equation}
\left.\frac{\overline\gamma_{\rm b}}{\overline\gamma_{\rm e}}\right|_{n\uparrow} 
= n \frac{\gamma_{0,{\rm b}}}{\gamma_{0,\rm e}}\left[\varphi_1^2 a_{\rm b}^4+\frac{1}{n}\left(1-\varphi_1^2 a_{\rm b}^4\right)\right]
\end{equation}
This implies that, as long as the operational assumptions for the \ac{LRS} hold, the use of a larger \ac{LRS} can provide an \ac{SNR} boost to the legitimate link compared to the eavesdropper's counterpart.
\end{remark}

Inspection of \eqref{eq2} and \eqref{eq5} reveals that the legitimate and eavesdropper's links share a common part through $H_{i,1}$. However, we will now prove that both equivalent channels are statistically independent.
\begin{theorem}[Independence of legitimate and wiretap links]\label{th1} Let us consider the equivalent legitimate and wiretap channels in \eqref{eq2} and \eqref{eq5}. Then, $H_b$ and $H_e$ are independent if $\angle H_{i,{\rm e}}\sim\mathcal{U}[-\pi,\pi)$. This is the case, e.g., of considering Rayleigh fading for the \ac{LRS} to eavesdropper's links.
\end{theorem}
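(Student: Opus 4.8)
The plan is to condition away the only source of randomness that couples the two channels. First I would make the residual wiretap phase explicit: substituting $\phi_i=-\angle H_{i,1}-\angle H_{i,{\rm b}}+\Theta_i$ into \eqref{eq5} gives $\Psi_i\equiv\angle H_{i,{\rm e}}-\angle H_{i,{\rm b}}+\Theta_i\pmod{2\pi}$. Hence $H_{\rm b}$ in \eqref{eq2} is a measurable function of the collection $\{|H_{i,1}|,|H_{i,{\rm b}}|,\Theta_i\}_{i=1}^n$ only, whereas $H_{\rm e}$ in \eqref{eq5} depends on $\{|H_{i,1}|,|H_{i,{\rm e}}|,\Psi_i\}_{i=1}^n$; the two share the magnitudes $\{|H_{i,1}|\}$ and, through $\Psi_i$, the phase errors, and the goal is to show that these shared ingredients carry no information about $H_{\rm e}$ once the R--E phases are averaged out.

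The crucial step --- and the only place the hypothesis $\angle H_{i,{\rm e}}\sim\mathcal{U}[-\pi,\pi)$ enters --- is to show that, conditioned on the $\sigma$-algebra $\mathcal{G}$ generated by all fading magnitudes, all legitimate phases $\{\angle H_{i,{\rm b}}\}$ and all phase errors $\{\Theta_i\}$ (i.e.\ everything except the wiretap phases $\{\angle H_{i,{\rm e}}\}$), the $\{\Psi_i\}$ remain i.i.d.\ uniform on $[-\pi,\pi)$. Indeed, for Rayleigh (or, more generally, uniform-phase) R--E links, $\angle H_{i,{\rm e}}$ is independent of $|H_{i,{\rm e}}|$, of all quantities of the A--R and R--B links, and of $\Theta_i$, so $\angle H_{i,{\rm e}}$ is independent of $\mathcal{G}$; writing $\Psi_i=\angle H_{i,{\rm e}}+c_i$ with $c_i=\Theta_i-\angle H_{i,{\rm b}}$ being $\mathcal{G}$-measurable, adding a deterministic constant to a circular-uniform variable leaves it circular-uniform (this is the mechanism of \cite{Scire1968}). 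Note also that $H_{\rm b}$ is $\mathcal{G}$-measurable.

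Conditioned on $\mathcal{G}$ we then have $H_{\rm e}\stackrel{d}{=}\tfrac1n\sum_{i=1}^n r_i e^{j\Psi_i}$ with $r_i=|H_{i,1}|\,|H_{i,{\rm e}}|$ fixed and $\Psi_i$ i.i.d.\ circular-uniform, i.e.\ a sum of independent circularly symmetric terms. Invoking the central-limit argument behind \cite[Corol.~2]{Badiu2020}, for large $n$ this conditional law is the circularly symmetric $\mathcal{CN}(0,v_n)$ with $v_n=n^{-2}\sum_i r_i^2$; and because the fading is unit-power and the links are mutually independent, the strong law gives $n v_n=n^{-1}\sum_i |H_{i,1}|^2|H_{i,{\rm e}}|^2\to \mathbb{E}\{|H_{i,1}|^2\}\,\mathbb{E}\{|H_{i,{\rm e}}|^2\}=1$ for $\mathbb{P}$-almost every realization of $\mathcal{G}$. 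Hence the conditional law of $\sqrt{n}\,H_{\rm e}$ given $\mathcal{G}$ tends to $\mathcal{CN}(0,1)$ irrespective of the conditioning, so $H_{\rm e}$ is (asymptotically) independent of $\mathcal{G}$ and, a fortiori, of $H_{\rm b}$. Independence of $\gamma_{\rm b}$ and $\gamma_{\rm e}$ then follows since they are deterministic functions of $|H_{\rm b}|$ and $|H_{\rm e}|$, respectively.

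The main obstacle is the residual finite-$n$ coupling: the conditional law of $H_{\rm e}$ given $\mathcal{G}$ genuinely depends on $\mathcal{G}$ through $\sum_i r_i^2$, so exact independence fails and the statement has to be read in the same large-$n$ sense as the Gaussian/Rayleigh approximations of Section~III --- the work is precisely in using the concentration $n^{-1}\sum_i|H_{i,1}|^2|H_{i,{\rm e}}|^2\to1$ together with the CLT to wash this dependence out. An equivalent route that avoids conditioning on the common magnitudes is to establish the joint asymptotic normality of $(\Re H_{\rm b},\Im H_{\rm b},\Re H_{\rm e},\Im H_{\rm e})$ by a multivariate CLT and then verify that the four cross-covariances vanish: the terms with distinct element indices vanish because $\mathbb{E}\{e^{j\Psi_k}\}=0$, and the same-index terms vanish after conditioning on $\mathcal{G}$, again by circular-uniformity of $\Psi_i$; joint Gaussianity together with zero cross-covariance then yields independence.
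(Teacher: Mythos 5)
Your proposal is correct, and it is in fact a strengthened version of the paper's argument rather than the same proof. The paper conditions on $Z=\{H_{i,1},H_{i,{\rm b}},\Theta_i\}$ and, using the same Scir\`e-type circular-uniformity of $\Psi_i$ and the independence of $|H_{i,{\rm e}}|$ and $e^{j\Psi_i}$ that you invoke, shows only that $\mathbb{E}\{H_{\rm e}\,|\,Z\}=0$, from which it directly declares independence of $H_{\rm b}$ and $H_{\rm e}$; strictly speaking that step establishes mean-independence (vanishing covariance and pseudo-covariance), not independence. You identify precisely the point the paper glosses over: conditioned on the shared magnitudes $|H_{i,1}|$, the conditional variance of $H_{\rm e}$, proportional to $\sum_i |H_{i,1}|^2|H_{i,{\rm e}}|^2$, still depends on the conditioning, so exact finite-$n$ independence fails in general, and you supply the missing ingredient, namely the central-limit step together with the almost-sure concentration $n^{-1}\sum_i|H_{i,1}|^2|H_{i,{\rm e}}|^2\to 1$ (or, equivalently, your second route of joint asymptotic normality of $(\Re H_{\rm b},\Im H_{\rm b},\Re H_{\rm e},\Im H_{\rm e})$ with vanishing cross-covariances), which yields independence in the same large-$n$ sense in which the equivalent scalar Gaussian/Rayleigh channel models of Section~III are valid. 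What the paper's route buys is brevity and the key mechanism (uniformity of $\Psi_i$ once the phases are matched to the legitimate link); what your route buys is a logically complete and correctly qualified statement, at the price of asymptotics, and it would be worth making explicit the moment conditions needed for the conditional Lindeberg CLT and the law of large numbers on the products $|H_{i,1}|^2|H_{i,{\rm e}}|^2$, which hold trivially for the unit-power fading models considered.
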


\begin{proof}
See Appendix \ref{appendix1}.
\end{proof}

%%%%%%%%%%%%%%%%%%%%%%%%%%%%%%%
% PLS
%%%%%%%%%%%%%%%%%%%%%%%%%%%%%%%
\section{\ac{PLS} Performance}
\label{PLSp}
In this section, we will derive analytical expressions for the chief \ac{PLS} performance metrics defined previously. We will consider three different scenarios for our analysis, which imply different approximations for the legitimate/wiretap links, respectively: (\emph{a}) no phase errors -- FN/Rayleigh case; (\emph{b}) phase errors -- Beckmann/Rayleigh case; (\emph{c}) phase errors -- Nakagami/Rayleigh case. For the sake of shorthand notation, we will refer to these scenarios with the subindices FR, BR and NR, respectively.
\vspace{-2mm}
\subsection{\text{SOP} Analysis}
\begin{lemma}[SOP in FR scenario]\label{Lema1}
 The SOP and the asymptotic SOP expressions ($\overline\gamma_{\rm b}\rightarrow\infty$) in the absence of phase errors for LRS-aided communications are given by 
 \end{lemma}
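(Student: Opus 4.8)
The plan is to evaluate the secrecy‑outage integral \eqref{eqSOP} directly, since in the FR scenario both \ac{SNR} laws are elementary. From the previous subsections, $\gamma_{\rm e}$ is exponential with $f_{\gamma_{\rm e}}(x)=\tfrac{1}{\overline\gamma_{\rm e}}e^{-x/\overline\gamma_{\rm e}}$ and $\overline\gamma_{\rm e}=n\gamma_{0,{\rm e}}$, whereas in the absence of phase errors $H_{\rm b}$ is a real Gaussian, so $\gamma_{\rm b}=n^{2}\gamma_{0,{\rm b}}|H_{\rm b}|^{2}$ is a squared folded‑normal variable whose \ac{CDF} admits the closed form
\[
F_{\gamma_{\rm b}}(x)=\tfrac{1}{2}\Big[\mathrm{erf}\big(\sqrt{\tfrac{(K+1)x}{2\overline\gamma_{\rm b}}}+\sqrt{\tfrac{K}{2}}\big)+\mathrm{erf}\big(\sqrt{\tfrac{(K+1)x}{2\overline\gamma_{\rm b}}}-\sqrt{\tfrac{K}{2}}\big)\Big],
\]
with $K$ as in \eqref{eqK} evaluated at $\varphi_1=1$, or, equivalently, $F_{\gamma_{\rm b}}(x)=1-Q\big(\sqrt{(K+1)x/\overline\gamma_{\rm b}}-\sqrt{K}\big)-Q\big(\sqrt{(K+1)x/\overline\gamma_{\rm b}}+\sqrt{K}\big)$ in terms of the Gaussian $Q$‑function, which is the more convenient form for integration. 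Since $\gamma_{\rm b}$ and $\gamma_{\rm e}$ are independent by Theorem~\ref{th1}, \eqref{eqSOP} applies verbatim.

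First I would substitute $F_{\gamma_{\rm b}}$ and $f_{\gamma_{\rm e}}$ into \eqref{eqSOP} and apply the change of variable $v=\gamma_{\rm e}+d$ with $d=(\tau-1)/\tau$ and $\tau=2^{R_{\rm S}}$; each summand then becomes a canonical integral $\int_{d}^{\infty}Q\big(\alpha\sqrt{v}\pm\beta\big)\,e^{-v/\overline\gamma_{\rm e}}\,dv$ with $\alpha=\sqrt{(K+1)\tau/\overline\gamma_{\rm b}}$ and $\beta=\sqrt{K}$. Such an integral is closed by integration by parts (differentiating the $Q$‑function, integrating the exponential): this leaves a boundary term proportional to $Q(\alpha\sqrt{d}\pm\beta)\,e^{-d/\overline\gamma_{\rm e}}$ plus a residual Gaussian integral which, after the substitution $w=\sqrt{v}$ and completing the square in $w$, reduces to a complementary error function with an exponential prefactor. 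Collecting the two signs $\pm\beta$ and the two summands of $F_{\gamma_{\rm b}}$ then gives the SOP as a finite combination of $Q$‑ (equivalently $\mathrm{erfc}$‑) functions and exponentials in $\tau$, $\overline\gamma_{\rm b}$, $\overline\gamma_{\rm e}$ and $K$.

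For the high‑SNR limit $\overline\gamma_{\rm b}\to\infty$ (equivalently $\gamma_{0,{\rm b}}\to\infty$) I would instead replace $F_{\gamma_{\rm b}}$ by its behaviour near the origin: since the folded‑normal density is strictly positive at zero, $F_{\gamma_{\rm b}}(x)\sim e^{-K/2}\sqrt{2(K+1)x/(\pi\overline\gamma_{\rm b})}$, a $\sqrt{x}$ law rather than a linear one. Substituting this into \eqref{eqSOP} leaves the single expectation $\mathbb{E}\big[\sqrt{\tau\gamma_{\rm e}+\tau-1}\big]$, which for exponential $\gamma_{\rm e}$ equals $\sqrt{\tau\overline\gamma_{\rm e}}\,e^{d/\overline\gamma_{\rm e}}\,\Gamma(3/2,\,d/\overline\gamma_{\rm e})$, an upper incomplete gamma function (expressible via $\mathrm{erfc}$). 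Hence the asymptotic SOP decays as $\overline\gamma_{\rm b}^{-1/2}$, so the equivalent scalar channel shows a diversity order of only $1/2$---exactly the ``dissimilar behaviour'' in diversity anticipated above, and the reason the FN, Beckmann and Nakagami approximations for $\gamma_{\rm b}$ are all retained.

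The main obstacle is the exact SOP: the compound arguments $\alpha\sqrt{v}\pm\beta$ inside the Gaussian $Q$‑functions preclude a direct table lookup, so the integration‑by‑parts / complete‑the‑square route must be carried out carefully, keeping track of both signs and, crucially, of the nonzero lower limit $d=(\tau-1)/\tau$ (present whenever $R_{\rm S}>0$), which is precisely what makes the $Q$‑function and incomplete‑gamma arguments appear shifted. The asymptotic part is then routine once the $\sqrt{x}$ leading behaviour of $F_{\gamma_{\rm b}}$ has been identified.
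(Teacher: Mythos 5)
Your proof is correct, but it takes a genuinely different route from the paper's. The paper proves \eqref{SopFR} by recognizing that, without phase errors, $\gamma_{\rm b}$ is a squared folded normal, i.e.\ a squared $\kappa$-$\mu$ variate with $\mu_{\kappa\text{-}\mu}=0.5$, and then simply specializes the existing $\kappa$-$\mu$/Rayleigh SOP result of \cite{Lopezz}; the asymptote \eqref{AsymSopFR} comes from the approach of \cite{Perim} with $\mu_{\kappa\text{-}\mu}=0.5$ substituted into \eqref{eqSOP}. You instead work from first principles, and your route does close: with your $\alpha=\sqrt{(K+1)\tau/\overline\gamma_{\rm b}}$, $\beta=\sqrt{K}$ and $A=\alpha^{2}+2/\overline\gamma_{\rm e}$, the integration-by-parts boundary terms after the shift $v=\gamma_{\rm e}+(\tau-1)/\tau$ give exactly $F_{\gamma_{\rm b}}(\tau-1)=1-Q_{0.5}(a_0,b_0)$, while the residual Gaussian integrals, upon completing the square, yield $\alpha\beta/\sqrt{A}=a_s$, $\sqrt{A(\tau-1)/\tau}=b_s$, $\alpha/\sqrt{A}=a_s/\sqrt{K}$ and an exponent equal to $(\tau-1)/(\tau\overline\gamma_{\rm e})+c_s$, i.e.\ precisely \eqref{SopFR}. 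Similarly, your asymptotic step $\mathbb{E}\{\sqrt{\tau\gamma_{\rm e}+\tau-1}\}=e^{(\tau-1)/(\tau\overline\gamma_{\rm e})}\sqrt{\tau\overline\gamma_{\rm e}}\,\Gamma\left(\tfrac{3}{2},\tfrac{\tau-1}{\tau\overline\gamma_{\rm e}}\right)$ combined with $F_{\gamma_{\rm b}}(x)\approx e^{-K/2}\sqrt{2(K+1)x/(\pi\overline\gamma_{\rm b})}$ reproduces \eqref{AsymSopFR} exactly once $\Gamma(\tfrac{3}{2},\cdot)$ is divided by $\Gamma(\tfrac{3}{2})=\sqrt{\pi}/2$ to obtain the regularized $\tilde\Gamma(1.5,\cdot)$. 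As for what each approach buys: yours is self-contained, needs only Gaussian integrals, and makes transparent both the secrecy diversity order $1/2$ and why a Marcum $Q$-function of order $0.5$ (a sum of two Gaussian $Q$-functions) appears; the paper's is a one-line specialization that places the FR case in the same incomplete-MGF framework reused for the BR and NR lemmas, at the cost of hiding the underlying computation in the cited references.
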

 \vspace{-7mm}
  \begin{align}\label{SopFR}
 \text{P}_{\text{FR}}&= 1-Q_{0.5}\left(a_0,b_0\right)+e^{\frac{\tau-1}{\tau\overline\gamma_{\rm e}}+c_s}\tfrac{a_s}{\sqrt{K}}Q_{0.5}\left(a_s,b_s\right),
\end{align}
 \begin{align}\label{AsymSopFR}
 \text{P}_{\text{FR}}^{\infty}\simeq e^{-K/2+\tfrac{\tau-1}{\tau\overline\gamma_{\rm e}}}\sqrt{\tfrac{\tau\overline\gamma_{\rm e}(1+K)}{2\overline\gamma_{\rm b}}}\tilde\Gamma\left(1.5,\tfrac{\tau-1}{\tau\overline\gamma_{\rm e}}\right),
\end{align}
with $\tau=2^{R_{\rm S}}$, $\tilde\Gamma(\cdot,\cdot)$ is the regularized upper incomplete Gamma function, $a_s=\sqrt{\frac{K(K+1)}{K+1-2\overline\gamma_{\rm b}s}}$, $b_s=\sqrt{2\left(\frac{K+1}{2\overline\gamma_{\rm b}}-s\right)z}$, $s=-\frac{1}{\tau\overline\gamma_{\rm e}}$, $z=\tau-1$ and $c_s=\frac{K\overline\gamma_{\rm b}s}{K+1-2\overline\gamma_{\rm b}s}$. The Marcum $Q$-function of order $0.5$ can be easily computed with the help of the Gaussian $Q$ function as $Q_{0.5}(a,b)=Q(b-a)+Q(b+a)$.
\begin{proof}
First, \eqref{SopFR} is obtained from \cite{Lopezz} by specializing the parameter of the $\kappa$-$\mu$ distribution to $\mu_{\kappa\text{-}\mu}=0.5$ and some manipulations. Then, \eqref{AsymSopFR} is obtained by using the approach in~\cite{Perim} with $\mu_{\kappa\text{-}\mu}=0.5$, and then substituting the resulting expression in \eqref{eqSOP} followed by some manipulations.
\end{proof}

\begin{lemma}[SOP in BR scenario]\label{Lema1b}
 The SOP and the asymptotic SOP expressions ($\overline\gamma_{\rm b}\rightarrow\infty$) considering phase errors in the LRS-aided communications are given by 
 \end{lemma}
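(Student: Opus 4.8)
The plan is to combine the three facts already established: $\gamma_{\rm b}$ is a squared Beckmann \ac{RV} (with parameters $K$, $q$, $\overline\gamma_{\rm b}$ as defined around \eqref{eqK}), $\gamma_{\rm e}$ is exponential with mean $\overline\gamma_{\rm e}=n\gamma_{0,{\rm e}}$, and, by Theorem~\ref{th1}, $\gamma_{\rm b}$ and $\gamma_{\rm e}$ are independent, so the SOP is exactly \eqref{eqSOP}. The first step is to collapse \eqref{eqSOP} to a one-dimensional integral in the legitimate variable. Writing $F_{\gamma_{\rm b}}(\tau\gamma_{\rm e}+\tau-1)=\int_0^{\tau\gamma_{\rm e}+\tau-1}f_{\gamma_{\rm b}}(w)\,dw$, substituting $f_{\gamma_{\rm e}}(\gamma_{\rm e})=\frac{1}{\overline\gamma_{\rm e}}e^{-\gamma_{\rm e}/\overline\gamma_{\rm e}}$ and exchanging the order of integration (legitimate, the integrand being non-negative) gives
\[
\text{P}_{\text{BR}}=F_{\gamma_{\rm b}}(z)+e^{-sz}\!\int_{z}^{\infty}\!f_{\gamma_{\rm b}}(w)\,e^{sw}\,dw ,
\]
with $z=\tau-1$ and $s=-1/(\tau\overline\gamma_{\rm e})$, exactly as in Lemma~\ref{Lema1}. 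Everything therefore reduces to (i) one point evaluation of the Beckmann \ac{CDF} and (ii) a \emph{truncated} Laplace transform of the Beckmann \ac{PDF}, both at the benign sign $s<0$.

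The only real departure from the FR case is that the squared-Beckmann \ac{CDF} has no finite closed form when $q>1$. I would handle (ii) by the same exponential-tilting manipulation that produced the $Q_{0.5}(a_s,b_s)$ term in \eqref{SopFR}. Using the Gaussian representation $\gamma_{\rm b}=n^2\gamma_{0,{\rm b}}(U_{\rm b}^2+V_{\rm b}^2)$ and completing the square, the weight $e^{sw}$ re-scales $U_{\rm b}$ and $V_{\rm b}$ into Gaussians with variances $\sigma_{U_{\rm b}}^2/(1-2n^2\gamma_{0,{\rm b}}s\,\sigma_{U_{\rm b}}^2)$ and $\sigma_{V_{\rm b}}^2/(1-2n^2\gamma_{0,{\rm b}}s\,\sigma_{V_{\rm b}}^2)$ and a correspondingly shrunk mean; consequently $\int_z^\infty f_{\gamma_{\rm b}}(w)e^{sw}\,dw$ factors as the product of the \acp{MGF} of the tilted $U_{\rm b}^2$ and $V_{\rm b}^2$ --- the first furnishing an $a_s/\sqrt{K}$-, $c_s$-type factor (with $\sigma_{U_{\rm b}}^2$ now carrying the $q$ dependence), the second a new $(1-2n^2\gamma_{0,{\rm b}}s\,\sigma_{V_{\rm b}}^2)^{-1/2}$ Hoyt-type factor --- times the complementary \ac{CDF} of a Beckmann \ac{RV} with the tilted parameters. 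Because the two tilted variances stay unequal, that complementary \ac{CDF}, and likewise $F_{\gamma_{\rm b}}(z)$, cannot collapse to a finite Marcum-$Q$ combination; I would keep it as the standard single integral obtained by passing to polar coordinates in the standardized $(U_{\rm b},V_{\rm b})$-plane (an angular integral of an exponential), or equivalently as a single integral over $v=V_{\rm b}$ of a $Q_{0.5}$/Gaussian-$Q$ kernel. The exact $\text{P}_{\text{BR}}$ then reads as one such single integral, the elementary $\gamma_{\rm e}$-integration having already been carried out above.

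For the asymptotic expression I would linearise the Beckmann \ac{CDF} at the origin. With $n$ held fixed and $\overline\gamma_{\rm b}\to\infty$ (so $K$ and $q$ stay fixed), the bivariate Gaussian density of $(U_{\rm b},V_{\rm b})$ is continuous and strictly positive at the origin --- precisely because $V_{\rm b}$ has zero mean and non-zero variance --- so $F_{\gamma_{\rm b}}(x)=f_{\gamma_{\rm b}}(0)\,x+o(x)$ with $f_{\gamma_{\rm b}}(0)=\big(2\,n^2\gamma_{0,{\rm b}}\,\sigma_{U_{\rm b}}\sigma_{V_{\rm b}}\big)^{-1}e^{-\mu^2/(2\sigma_{U_{\rm b}}^2)}$. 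Re-expressing $\sigma_{U_{\rm b}}$, $\sigma_{V_{\rm b}}$, $\mu$ through $K$, $q$, $\overline\gamma_{\rm b}$ (set $\Sigma=\sigma_{U_{\rm b}}^2+\sigma_{V_{\rm b}}^2$; then $\sigma_{U_{\rm b}}^2=\tfrac{q^2}{1+q^2}\Sigma$, $\sigma_{V_{\rm b}}^2=\tfrac{1}{1+q^2}\Sigma$, $\mu^2=K\Sigma$ and $(K+1)\Sigma=\mathbb{E}\{R_{\rm b}^2\}=\overline\gamma_{\rm b}/(n^2\gamma_{0,{\rm b}})$) yields $f_{\gamma_{\rm b}}(0)=\tfrac{(1+q^2)(K+1)}{2q\,\overline\gamma_{\rm b}}e^{-K(1+q^2)/(2q^2)}$. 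Plugging this into \eqref{eqSOP} and using $\int_0^\infty(\tau\gamma_{\rm e}+\tau-1)f_{\gamma_{\rm e}}(\gamma_{\rm e})\,d\gamma_{\rm e}=\tau(\overline\gamma_{\rm e}+1)-1$ gives
\[
\text{P}_{\text{BR}}^{\infty}\simeq\frac{(1+q^2)(K+1)}{2q\,\overline\gamma_{\rm b}}\,e^{-\frac{K(1+q^2)}{2q^2}}\big[\tau(\overline\gamma_{\rm e}+1)-1\big] ,
\]
i.e.\ a diversity order of one, as opposed to the order one-half of the FR case --- consistent with the dissimilar behaviour of the FN and Beckmann distributions noted earlier. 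The interchange of the $\overline\gamma_{\rm b}\to\infty$ limit with the $\gamma_{\rm e}$-integral is justified by dominated convergence, the bound $F_{\gamma_{\rm b}}(x)\le x\sup_t f_{\gamma_{\rm b}}(t)$ making $\overline\gamma_{\rm b}F_{\gamma_{\rm b}}(\tau\gamma_{\rm e}+\tau-1)$ dominated by an integrable function of $\gamma_{\rm e}$.

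The step I expect to be the main obstacle is the non-circular symmetry of the Beckmann model ($q>1$): it prevents the truncated Laplace transform in (ii) from reducing to the finite combination of Marcum-$Q$ functions that made the FR statement fully closed-form, so the exact $\text{P}_{\text{BR}}$ must be left as a single integral (either angular, or over the $V_{\rm b}$-coordinate; an infinite hypergeometric-type series is an equivalent alternative). One must also verify that the completing-the-square/tilting step remains valid for $s=-1/(\tau\overline\gamma_{\rm e})<0$, which is indeed the benign sign for which the tilted variances stay positive.
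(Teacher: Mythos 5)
Your proposal is correct and lands exactly on the paper's expressions, but it is a self-contained derivation of what the paper simply cites: the exact part of \eqref{SopBR} is obtained in the paper by invoking \cite[Eq.~(21)]{Lopezz}, and your Fubini step (write $F_{\gamma_{\rm b}}$ as an integral of the density, swap the order, integrate the exponential tail of $\gamma_{\rm e}$) is precisely the derivation behind that cited identity, yielding the $F_{\gamma_{\rm b}}(\tau-1)$ term plus $e^{(\tau-1)/(\tau\overline\gamma_{\rm e})}$ times the upper-incomplete MGF at $s=-1/(\tau\overline\gamma_{\rm e})$; likewise, your linearization $F_{\gamma_{\rm b}}(x)\approx f_{\gamma_{\rm b}}(0)\,x$ with $f_{\gamma_{\rm b}}(0)=\tfrac{(1+q^2)(K+1)}{2q\overline\gamma_{\rm b}}e^{-K(1+q^2)/(2q^2)}$ is exactly the content of \cite[Proposition 3]{Wang2003} specialized to diversity order $d=1$ (first-order behavior of the squared-Beckmann PDF at the origin, nonzero because $V_{\rm b}$ has zero mean and positive variance), and after averaging $\tau\gamma_{\rm e}+\tau-1$ over the exponential wiretap SNR it reproduces \eqref{AsymSopBR} with the correct constant. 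What your route buys is transparency: the diversity-order-one behavior and its $K$, $q$ dependence follow directly from the Gaussian density at the origin rather than from a black-box citation, and your dominated-convergence remark supplies the limit-interchange justification the paper leaves implicit. The one place you diverge is the evaluation of the incomplete MGF: you propose an exponential-tilting/single-integral (or $V_{\rm b}$-coordinate) representation, correctly observing that $q>1$ precludes the finite Marcum-$Q$ form of the FR case in \eqref{SopFR}, whereas the paper computes it numerically by inverse Laplace transformation of a shifted and scaled MGF following \cite{Abate1995}; this is an implementation choice, not a gap, and both are legitimate since the lemma's statement only requires the SOP in terms of $F_{\gamma_{\rm b}}$ and $\mathcal{M}_{\gamma_{\rm b}}^{u}$.
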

 \vspace{-7mm}
  \begin{align}\label{SopBR}
 \text{P}_{\text{BR}}&= F_{\gamma_{\rm b}}(\tau-1) +\exp\left ( \tfrac{\tau-1}{\tau \overline\gamma_{\rm e}} \right ) \mathcal{M}_{\gamma_{\rm b}}^{u}\left ( -\tfrac{1}{\tau \overline\gamma_{\rm e}} ,\tau-1\right ).
\end{align}
 \begin{align}\label{AsymSopBR}
 \text{P}_{\text{BR}}^{\infty}\simeq & \exp\left ( - \tfrac{K\left ( 1+q^2 \right )}{2q^2} \right )\tfrac{\left ( 1+K \right )\left ( 1+q^2 \right )\left ( \overline\gamma_{\rm e}\tau+\tau-1 \right ) }{2q\overline\gamma_{\rm b}}
\end{align}
where $F_{\gamma_{\rm b}}(\cdot)$~\cite[Eq.~(7)]{Lopezz} is the CDF of a squared Beckmann distribution, and $\mathcal{M}_{\gamma_{\rm b}}^{u}\left (\cdot,\cdot\right )$~\cite[Eq.~(3)]{Lopezz} is the upper-incomplete moment generating function (MGF) of the RV $\gamma_{\rm b}$, which follows a squared Beckmann distribution. The evaluation of $\mathcal{M}_{\gamma_{\rm b}}^{u}\left (\cdot,\cdot\right )$ is carried out numerically through an inverse Laplace transformation \cite{Abate1995} over a shifted and scaled version of the (conventional) MGF of $\gamma_{\rm b}$, as in \cite[Eq. 4]{Lopezz}, which is obtained from~\cite[Eq.~(2.41)]{Simon2005} with $r\rightarrow \infty$. 
\begin{proof}
$\text{P}_{\text{BR}}$ can be obtained directly from~\cite[Eq.~(21)]{Lopezz} with the respective substitutions. On the other hand, the $ \text{P}_{\text{BR}}^{\infty}$ is derived by using~\cite[Proposition 3]{Wang2003}, in which $d=1$, using the MGF of $\gamma_b$.
\end{proof}

%\textcolor{blue}{poner remark de la Mx de Beckmann simple paramization, fue a partir de la MGF }
\begin{lemma}[SOP in NR scenario]\label{Lema2}
 The SOP and the asymptotic SOP expressions ($\overline\gamma_{\rm b}\rightarrow\infty$) considering phase errors in the LRS-aided communications can be approximated as 
 \end{lemma}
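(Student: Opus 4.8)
The plan is to reuse the generic SOP decomposition already employed for the BR scenario, namely \cite[Eq.~(21)]{Lopezz}, which for an exponential $\gamma_{\rm e}$ with mean $\overline\gamma_{\rm e}=n\gamma_{0,{\rm e}}$ reads $\text{P}=F_{\gamma_{\rm b}}(\tau-1)+\exp\!\left(\tfrac{\tau-1}{\tau\overline\gamma_{\rm e}}\right)\mathcal{M}^{u}_{\gamma_{\rm b}}\!\left(-\tfrac{1}{\tau\overline\gamma_{\rm e}},\tau-1\right)$ and holds for any law of $\gamma_{\rm b}$. The only task is then to insert the Gamma (squared Nakagami-$m$) model for $\gamma_{\rm b}$, with shape $m=\tfrac{n}{2}\tfrac{\varphi_1^2 a_{\rm b}^4}{1+\varphi_2-2\varphi_1^2 a_{\rm b}^4}$ and mean $\overline\gamma_{\rm b}=n^2\gamma_{0,{\rm b}}\varphi_1^2 a_{\rm b}^4$: then $F_{\gamma_{\rm b}}(x)=\gamma(m,mx/\overline\gamma_{\rm b})/\Gamma(m)$, and computing $\mathcal{M}^u_{\gamma_{\rm b}}(s,x)=\int_x^\infty e^{st}f_{\gamma_{\rm b}}(t)\,dt$ directly gives the closed form $\mathcal{M}^u_{\gamma_{\rm b}}(s,x)=(1-s\overline\gamma_{\rm b}/m)^{-m}\,\Gamma\!\left(m,(m/\overline\gamma_{\rm b}-s)x\right)/\Gamma(m)$, i.e.\ the ordinary Gamma MGF times a regularized upper incomplete Gamma factor. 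Substituting $s=-1/(\tau\overline\gamma_{\rm e})$ and $x=\tau-1$ yields the claimed (approximate) $\text{P}_{\text{NR}}$ entirely in terms of (incomplete) Gamma functions; since $m$ is generally non-integer, no finite-sum simplification is available and the result is left in this form.

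For the asymptotic expression I would let $\overline\gamma_{\rm b}\to\infty$ in \eqref{eqSOP} and replace $F_{\gamma_{\rm b}}$ by its leading small-argument term $F_{\gamma_{\rm b}}(x)\approx\tfrac{1}{\Gamma(m+1)}(mx/\overline\gamma_{\rm b})^m$, so that $\text{P}_{\text{NR}}^{\infty}\approx\tfrac{(m/\overline\gamma_{\rm b})^m}{\Gamma(m+1)}\int_0^\infty(\tau\gamma_{\rm e}+\tau-1)^m\tfrac{1}{\overline\gamma_{\rm e}}e^{-\gamma_{\rm e}/\overline\gamma_{\rm e}}\,d\gamma_{\rm e}$. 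The substitution $w=\gamma_{\rm e}+(\tau-1)/\tau$ turns the integral into $(\tau\overline\gamma_{\rm e})^m e^{(\tau-1)/(\tau\overline\gamma_{\rm e})}\Gamma\!\left(m+1,\tfrac{\tau-1}{\tau\overline\gamma_{\rm e}}\right)$, giving $\text{P}_{\text{NR}}^{\infty}\approx\left(\tfrac{m\tau\overline\gamma_{\rm e}}{\overline\gamma_{\rm b}}\right)^{m}\tfrac{1}{\Gamma(m+1)}e^{(\tau-1)/(\tau\overline\gamma_{\rm e})}\Gamma\!\left(m+1,\tfrac{\tau-1}{\tau\overline\gamma_{\rm e}}\right)$, which decays as $\overline\gamma_{\rm b}^{-m}$ and hence exhibits diversity order $m$. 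Equivalently, one may invoke \cite[Proposition~3]{Wang2003} with $d=m$ together with the Gamma MGF, exactly as was done for the BR case with $d=1$; this also serves as a cross-check of the exponent.

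Two points deserve care. First, the Nakagami-$m$ fit of the previous subsection must be inserted mean-matched — shape $m$, mean $\overline\gamma_{\rm b}$, hence Gamma scale $\overline\gamma_{\rm b}/m$ — since any mismatch here propagates into every power of $\overline\gamma_{\rm b}$, in particular into the claimed $\overline\gamma_{\rm b}^{-m}$ slope. Second, the upper-incomplete MGF identity above requires $m/\overline\gamma_{\rm b}-s>0$, which holds for $s=-1/(\tau\overline\gamma_{\rm e})<0$, so the representation is valid throughout the regime of interest; and the statement is phrased as an approximation rather than an equality precisely because the Nakagami/Gamma model is itself an approximation of the true law of $\gamma_{\rm b}$ from \cite{Badiu2020}.

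I expect the only genuinely delicate step to be the asymptotic one: one must check that truncating $F_{\gamma_{\rm b}}$ at its first term incurs an error that is $o(\overline\gamma_{\rm b}^{-m})$ and survives the integration against $f_{\gamma_{\rm e}}$ over $[0,\infty)$ — the exponential tail of $\gamma_{\rm e}$ makes this benign, but it has to be argued — and that the resulting incomplete-Gamma form with diversity order $m$ is consistent with, yet qualitatively distinct from, the $\overline\gamma_{\rm b}^{-1/2}$ and $\overline\gamma_{\rm b}^{-1}$ behaviours of the FR and BR scenarios, which is exactly the contrast the paper wishes to highlight. The exact part, by contrast, is essentially a substitution into the \cite{Lopezz} template once the Gamma upper-incomplete MGF is written down.
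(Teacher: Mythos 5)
Your proposal is correct and reaches both \eqref{SopNR} and \eqref{AsymSopNR}, but it is more self-contained than the paper's own argument, which is purely by specialization of cited results: the paper obtains the exact SOP by setting $\kappa=0$, $\mu=m$ in the $\kappa$-$\mu$ SOP expression of \cite{Lopezz} (Nakagami-$m$ being that special case), and the asymptote as a particular case of \cite[Eq.~(21)]{Moualeu}. You instead start from the generic decomposition $F_{\gamma_{\rm b}}(\tau-1)+e^{(\tau-1)/(\tau\overline\gamma_{\rm e})}\mathcal{M}^{u}_{\gamma_{\rm b}}\!\left(-\tfrac{1}{\tau\overline\gamma_{\rm e}},\tau-1\right)$ and compute the Gamma incomplete MGF explicitly, $\mathcal{M}^u_{\gamma_{\rm b}}(s,x)=(1-s\overline\gamma_{\rm b}/m)^{-m}\tilde\Gamma\!\left(m,(m/\overline\gamma_{\rm b}-s)x\right)$, which after the substitutions $s=-1/(\tau\overline\gamma_{\rm e})$, $x=\tau-1$ reproduces \eqref{SopNR} exactly; and you derive the asymptote by inserting the small-argument term $F_{\gamma_{\rm b}}(x)\approx(mx/\overline\gamma_{\rm b})^m/\Gamma(m+1)$ into \eqref{eqSOP} and integrating, which reproduces \eqref{AsymSopNR} and the diversity order $m$ (your cross-check via \cite[Proposition~3]{Wang2003} with $d=m$ mirrors the paper's BR treatment). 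The trade-off is transparency versus brevity: your route exposes the structure of the incomplete-Gamma terms and makes the $\overline\gamma_{\rm b}^{-m}$ slope evident without consulting the cited works, at the cost of having to justify (as you note) that the truncation error in $F_{\gamma_{\rm b}}$ remains $o(\overline\gamma_{\rm b}^{-m})$ after integration against the exponential density, a point the citation-based proof delegates to \cite{Moualeu}.
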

 \vspace{-7mm}
 \begin{align}\label{SopNR}
 \text{P}_{\text{NR}}&={\tilde\gamma\left ( m,\tfrac{(\tau-1) m}{\overline\gamma_{\rm b}} \right )}+e^{\tfrac{\tau-1}{\tau\overline\gamma_{\rm e}}} \tfrac{\tilde \Gamma\left (m,\left ( \tau-1 \right )\left ( \tfrac{m}{\overline\gamma_{\rm b}} +\frac{1}{\tau \overline\gamma_{\rm e}}\right ) \right ) }{\left (1+\tfrac{\overline\gamma_{\rm b}}{m\tau \overline\gamma_{\rm e}}\right )^m },\\
%\end{align}
 %\begin{align}
 \label{AsymSopNR}
 \text{P}_{\text{NR}}^{\infty}&\simeq e^{\tfrac{\tau-1}{\tau\overline\gamma_{\rm e}}}\left ( \tfrac{\tau m \overline\gamma_{\rm e}}{\overline\gamma_{\rm b}} \right )^m  \tilde\Gamma\left ( m+1,\tfrac{\tau-1}{\tau\overline\gamma_{\rm e}} \right ) .
\end{align}
where $\tilde\gamma(\cdot,\cdot)$ is the regularized lower incomplete Gamma functions~\cite[Eq.~(8.350.1)]{Gradshteyn}.
\begin{proof}
As in the proof of Lemma 2, \eqref{SopNR} is obtained from \cite{Lopezz} by setting the parameters of the $\kappa$-$\mu$ distribution $\kappa=0$ and $\mu=m$. Then, \eqref{AsymSopNR} is obtained as a particular case of~\cite[Eq.~(21)]{Moualeu} with the respective substitutions.
\end{proof}

Inspection of \eqref{AsymSopFR}, \eqref{AsymSopBR} and \eqref{AsymSopNR} reveals a different secrecy diversity order for each of the approximations, i.e., $1/2$, $1$ and $m$ for the FR, BR and NR cases, respectively. The implications arising from this observation will be discussed in the Numerical Results section.

\subsection{ASC Analysis}
For the sake of compactness, we will use a common formulation for the ASC metrics in the FR, NR and BR scenarios.
\begin{lemma}\label{LemaASC}
 The ASC and the asymptotic ASC ($\overline\gamma_{\rm b}\rightarrow\infty$) formulations over Z/Rayleigh fading channels for LRS-aided  communications can be obtained as
\begin{align}\label{ExactCs}
 \overline{C}_\mathrm{S} = 
  & \overline{C}_{\rm B} -  \overline{C}_{\rm E} + \mathcal{G}_{\rm Z}\left(\overline\gamma_{\rm b},\overline\gamma_{\rm e}\right)
  %\tfrac{e^{1/\overline\gamma_{\rm e}} }{\ln 2}E_1\left ( \tfrac{1}{\overline\gamma_{\rm e}} \right )+\tfrac{e^{1/\overline\gamma_{\rm e}} }{\ln 2}\int_{0}^1 \tfrac{1}{u}e^{-1/(u\overline\gamma_{\rm e})}M_{\gamma_{\rm b}}\left(\tfrac{-1}{u\overline\gamma_{\rm e}}\right)du
\end{align}
\vspace{-7mm}
\begin{align}\label{AsymCs}
 \overline{C}_\mathrm{S}^{\infty}&\approx  \overline{C}_{\rm B} -  \overline{C}_{\rm E},\\
\label{AsymCs2}
  &\approx \log_2\left (\overline\gamma_{\rm b}\right)-t_{\rm Z} -\overline{C}_\mathrm{E},
\end{align}
%%\begin{align}\label{CsBR}
% \overline{C}_\mathrm{S} = 
%  & \overline{C}_{\rm B} - \tfrac{e^{1/\overline\gamma_{\rm e}} }{\ln 2}E_1\left ( \tfrac{1}{\overline\gamma_{\rm e}} \right )+\tfrac{e^{1/\overline\gamma_{\rm e}} }{\ln 2}\int_{0}^1 \tfrac{1}{u}e^{-1/(u\overline\gamma_{\rm e})}M_{\gamma_{\rm b}}\left(\tfrac{-1}{u\overline\gamma_{\rm e}}\right)du
%\end{align}
%\begin{align}\label{AsymCsBR}
% \overline{C}_\mathrm{S}^{\infty}\simeq 
%  & \log_2\left (\tfrac{ \overline\gamma_{\rm b} K}{K+1}\right )+ \tfrac{1}{\ln 2}\left (  E_1\left ( K  \right )- \ln\left ( \tfrac{2\left ( 1+q^2 \right )}{\left ( 1+q \right )^2}\right )\right )-\overline{C}_\mathrm{E},
%\end{align}
where Z=$\{\text{Folded\,Normal},\text{Beckmann},\text{Nakagami}\}$ indicates the distribution of the legitimate link, and $t_{\rm Z}$ is a constant value that captures the fading severity loss of the legitimate link \cite{Moualeu}. We note that $\overline{C}_\mathrm{E}=\tfrac{e^{1/\overline\gamma_{\rm e}} }{\ln 2}E_1\left ( \tfrac{1}{\overline\gamma_{\rm e}} \right )$ denotes the average capacity of the wiretap link under the Rayleigh approximation, with $E_1\left (\cdot \right)$ being the Exponential integral function, and the term $\mathcal{G}_{\rm Z}\left(\overline\gamma_{\rm b},\overline\gamma_{\rm e}\right)=\tfrac{e^{1/\overline\gamma_{\rm e}} }{\ln 2}\int_{0}^1 \tfrac{1}{u}e^{-1/(u\overline\gamma_{\rm e})}M_{\gamma_{\rm b}}\left(\tfrac{-1}{u\overline\gamma_{\rm e}}\right)du\geq0$, where $\mathcal{M}_{\gamma_{\rm b}}\left (\cdot\right )$ is the (conventional) MGF of $\gamma_{\rm b}$.
 \end{lemma}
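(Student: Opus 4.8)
The plan is to reduce the whole statement to the ASC decomposition \eqref{eq22}--\eqref{eq24} and to use that $\gamma_{\rm e}$ is exponentially distributed with mean $\overline\gamma_{\rm e}$ (established above), so that $1-F_{\gamma_{\rm e}}(x)=e^{-x/\overline\gamma_{\rm e}}$. Substituting this into \eqref{eq24} and writing $1-F_{\gamma_{\rm b}}(\gamma_{\rm e})$ as $1$ minus $F_{\gamma_{\rm b}}(\gamma_{\rm e})$, the ASC loss separates as $\mathcal{L}(\overline\gamma_{\rm b},\overline\gamma_{\rm e})=\overline{C}_{\rm E}-\mathcal{G}_{\rm Z}(\overline\gamma_{\rm b},\overline\gamma_{\rm e})$: the $1$-part, $\tfrac{1}{\ln 2}\int_0^\infty\tfrac{e^{-\gamma_{\rm e}/\overline\gamma_{\rm e}}}{1+\gamma_{\rm e}}\,d\gamma_{\rm e}$, is computed by the shift $y=1+\gamma_{\rm e}$ and the definition of the exponential integral \cite{Gradshteyn}, which returns exactly $\overline{C}_{\rm E}=\tfrac{e^{1/\overline\gamma_{\rm e}}}{\ln 2}E_1(1/\overline\gamma_{\rm e})$, and the $F_{\gamma_{\rm b}}$-part is $\mathcal{G}_{\rm Z}(\overline\gamma_{\rm b},\overline\gamma_{\rm e})=\tfrac{1}{\ln 2}\int_0^\infty\tfrac{e^{-\gamma_{\rm e}/\overline\gamma_{\rm e}}F_{\gamma_{\rm b}}(\gamma_{\rm e})}{1+\gamma_{\rm e}}\,d\gamma_{\rm e}\ge0$. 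Plugging $\mathcal{L}=\overline{C}_{\rm E}-\mathcal{G}_{\rm Z}$ into \eqref{eq22} (with $\overline{C}_{\rm B}$ the legitimate-link average capacity appearing there) yields \eqref{ExactCs} at once, and $\mathcal{G}_{\rm Z}\ge0$ is immediate from the non-negativity of its integrand.

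It then remains to recast this integral form of $\mathcal{G}_{\rm Z}$ into the MGF form stated in the lemma. I would integrate by parts with $u=F_{\gamma_{\rm b}}(\gamma_{\rm e})$ and $dv=\tfrac{e^{-\gamma_{\rm e}/\overline\gamma_{\rm e}}}{1+\gamma_{\rm e}}\,d\gamma_{\rm e}$, whose antiderivative is $v=-e^{1/\overline\gamma_{\rm e}}E_1\!\left(\tfrac{1+\gamma_{\rm e}}{\overline\gamma_{\rm e}}\right)$; the boundary terms vanish because $F_{\gamma_{\rm b}}(0)=0$ for the FN, Beckmann and Nakagami legitimate links and $E_1(z)\to0$ as $z\to\infty$, so that $\mathcal{G}_{\rm Z}=\tfrac{e^{1/\overline\gamma_{\rm e}}}{\ln 2}\int_0^\infty E_1\!\left(\tfrac{1+x}{\overline\gamma_{\rm e}}\right)f_{\gamma_{\rm b}}(x)\,dx$. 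Inserting the representation $E_1(z)=\int_1^\infty t^{-1}e^{-zt}\,dt$, applying the change of variable $u=1/t$ (so $t\in[1,\infty)$ maps to $u\in(0,1]$ and $t^{-1}\,dt=u^{-1}\,du$), and swapping the order of integration (valid by Tonelli's theorem, the integrand being non-negative) turns the inner $x$-integral into the conventional MGF $\mathcal{M}_{\gamma_{\rm b}}(s)=\int_0^\infty e^{sx}f_{\gamma_{\rm b}}(x)\,dx$ evaluated at $s=-1/(u\overline\gamma_{\rm e})<0$, which always lies inside its region of convergence. This reproduces $\mathcal{G}_{\rm Z}(\overline\gamma_{\rm b},\overline\gamma_{\rm e})=\tfrac{e^{1/\overline\gamma_{\rm e}}}{\ln 2}\int_0^1\tfrac{1}{u}e^{-1/(u\overline\gamma_{\rm e})}\mathcal{M}_{\gamma_{\rm b}}\!\left(\tfrac{-1}{u\overline\gamma_{\rm e}}\right)du$, with $\mathcal{M}_{\gamma_{\rm b}}$ the squared-Beckmann, gamma or squared-FN MGF already used in the SOP analysis (the Beckmann one as the $r\to\infty$ limit of \cite[Eq.~(2.41)]{Simon2005}).

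For the asymptotic results I would argue that, as $\overline\gamma_{\rm b}\to\infty$, the normalised SNR $\gamma_{\rm b}/\overline\gamma_{\rm b}$ keeps a law independent of $\overline\gamma_{\rm b}$ while $\gamma_{\rm b}\to\infty$ in probability, so $\mathcal{M}_{\gamma_{\rm b}}(s)\to0$ for every fixed $s<0$; since the integrand of $\mathcal{G}_{\rm Z}$ is dominated by the $\overline\gamma_{\rm b}$-free, integrable envelope $u^{-1}e^{-1/(u\overline\gamma_{\rm e})}$ on $(0,1]$, dominated convergence gives $\mathcal{G}_{\rm Z}\to0$ and hence \eqref{AsymCs}. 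Equation \eqref{AsymCs2} then follows from the standard high-SNR expansion of the legitimate average capacity: writing $\overline{C}_{\rm B}=\log_2\overline\gamma_{\rm b}+\mathbb{E}\{\log_2(1+\gamma_{\rm b})-\log_2\overline\gamma_{\rm b}\}$ and letting $\overline\gamma_{\rm b}\to\infty$, the bracketed correction converges to $\mathbb{E}\{\log_2(\gamma_{\rm b}/\overline\gamma_{\rm b})\}=:-t_{\rm Z}$, a constant fixed only by the shape of the normalised legitimate fading (FN, Beckmann or Nakagami), i.e. the fading-severity loss of \cite{Moualeu}.

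I expect the only delicate step to be the middle one: the integration by parts together with the substitution $u=1/t$ and the Tonelli interchange must be arranged so that the boundary terms genuinely vanish and the MGF argument stays in the convergence strip, in order to collapse onto a single compact expression valid uniformly for all three choices of $\mathrm{Z}$. The closed form for $\overline{C}_{\rm E}$ and the two limiting arguments are then routine.
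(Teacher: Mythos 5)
Your derivation follows essentially the same route as the paper's proof: starting from \eqref{eq22}--\eqref{eq24} with the exponential wiretap CDF, extracting $\overline{C}_{\rm E}$, and converting the remaining term into the MGF form of $\mathcal{G}_{\rm Z}$ via integration by parts, the integral representation of $E_1$, and an interchange of the order of integration — all of these steps check out. The only deviation is that the paper settles the asymptotics by citing \cite[eq.~(43)]{Moualeu}, whereas you justify $\mathcal{G}_{\rm Z}\to 0$ (dominated convergence) and the offset $t_{\rm Z}$ directly, which is a correct and more self-contained substitute.
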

 
 \begin{proof}
See Appendix \ref{appendix2}.
\end{proof}
The previous Lemma allows us to evaluate the ASC in the investigated scenario in a compact form. $\overline{C}_{\rm B}$ can be either computed using available results in the literature \cite{Costa2007,Moreno2016}, or evaluated through numerical integration or quadrature methods. We note that as pointed out in \cite{Moualeu}, the term $\mathcal{G}_{\rm Z}\left(\overline\gamma_{\rm b},\overline\gamma_{\rm e}\right)$ vanishes as $\overline\gamma_{\rm b}$ grows, which in our case happens as $n$ is increased.

\vspace{-1mm}
%%%%%%%%%%%%%%%%%%%%%%%%%%%%%%%
% Numerical
%%%%%%%%%%%%%%%%%%%%%%%%%%%%%%%
\section{Numerical evaluation}

We now evaluate the effect of phase errors on the secrecy performance metrics in the investigated scenario, as well as the goodness of the scalar approximations for the equivalent composite channel in \ac{LRS}-assisted communications. For the links between $\mathrm{A}$ and the \ac{LRS}, and between the \ac{LRS} and $\mathrm{B}$, we consider Rician fading with parameter $K=1$. The links between the \ac{LRS} and $\mathrm{E}$ are assumed to be Rayleigh distributed. Hence, we have that $a_1= a_{2,\rm b}=\sqrt{\pi/(4(K+1))}{}_1F_1\left(-1/2,1,-K\right)$, where ${}_1F_1(\cdot)$ is Kummer hypergeometric's function, and $a_{2,\rm e}=\sqrt{\pi}/2$. For the sake of brevity, we consider phase errors due to the finite number of phase shifts available at the \ac{LRS}, although similar conclusions can be extracted by considering the phase estimation error model \cite{Badiu2020}; hence, the phase errors are uniformly distributed in the interval $[-u_{n_{b}},u_{n_{b}}]$ with $u_{n_{\rm b}}=-2^{-n_{b}}\pi$, where $n_{\rm b}$ is the number of quantization bits used to encode the phase shifts. Thus, from \cite{Badiu2020} we have $\varphi_i=\tfrac{\sin\left(u_{n_{\rm b}+1-i}\right)}{u_{n_{\rm b}+1-i}}$ for $i=\{1,2\}$.

In the next figures, we set $\overline\gamma_{\rm0,e}=10$ dB, a fixed transmit power $P_T$, and study the effect of increasing $n_{\rm b}$; the ideal case of no phase errors is included as a reference in all instances. The exact values for the secrecy metrics are obtained through Monte Carlo (MC) simulations. The analytical secrecy performance metrics in the FR, NR, and BR cases are included using the results in Section \ref{PLSp}. These have also been double-checked offline with additional MC simulations, which are not included in the figures for the sake of readability.

Fig. \ref{figASC1} shows the ASC as a function of $\overline\gamma_{\rm0,b}$, for different values of $n_{\rm b}$ and number of elements at the \ac{LRS} through $n$. Theoretical values have been evaluated with \eqref{ExactCs} and are represented using solid lines. Asymptotic values are computed with \eqref{AsymCs} for the BR case, and with \eqref{AsymCs2} for the FR and NR cases with $t_{\rm FR}$ and $t_{\rm NR}$ in \cite[Table II]{Moreno2016}. We extract important insights from the observation of Fig. \ref{figASC1}: (\emph{i}) increasing $n$ allows for improving the ASC for a fixed $\overline\gamma_{\rm0,b}$, thanks to the different scaling laws of the legitimate and wiretap average \ac{SNR}s; (\emph{ii}) FR (no phase errors) and BR (phase errors) equivalent scalar approximations work pretty well regardless of $n$, while the NR one underestimates the true ASC for low $n$; (\emph{iii}) asymptotic ASC expressions are tight for a wide range of SNR values; and (\emph{iv}) the performance degradation with $n_{\rm b}=2$ bits is small, which confirms that state-of-the-art solutions for \ac{LRS} surfaces \cite{Dai2020} may be enough to obtain a secrecy performance close to the case of no phase errors. Indeed, all previous remarks hold as long as the operating assumptions of the \ac{LRS} in terms of size as $n$ grows are valid.

%%%%%%%%%%%%%%%%%%%%%%%%%
%%%%%%%%%%%%%%%%%%%%%%%%%%%%%%%%%%%%%%%%%%%%%%%%%%  Figuras 
\begin{figure}[t]
\psfrag{Z}[Bc][Bc][0.6]{$\mathrm{\textit{n}=256}$}
\psfrag{q1}[Bc][Bc][0.6]{$n_{\rm b}=1$}
\psfrag{q2}[Bc][Bc][0.6]{$n_{\rm b}=2$}
 \psfrag{Z}[Bc][Bc][0.6]{$\mathrm{\textit{n}=256}$}
\psfrag{Z1}[Bc][Bc][0.6]{$\mathrm{\textit{n}=64}$}
\psfrag{Z2}[Bc][Bc][0.6]{$\mathrm{\textit{n}=16}$}
\psfrag{Z3}[Bc][Bc][0.6]{$\mathrm{\textit{n}=4}$}           \includegraphics[width=\linewidth]{./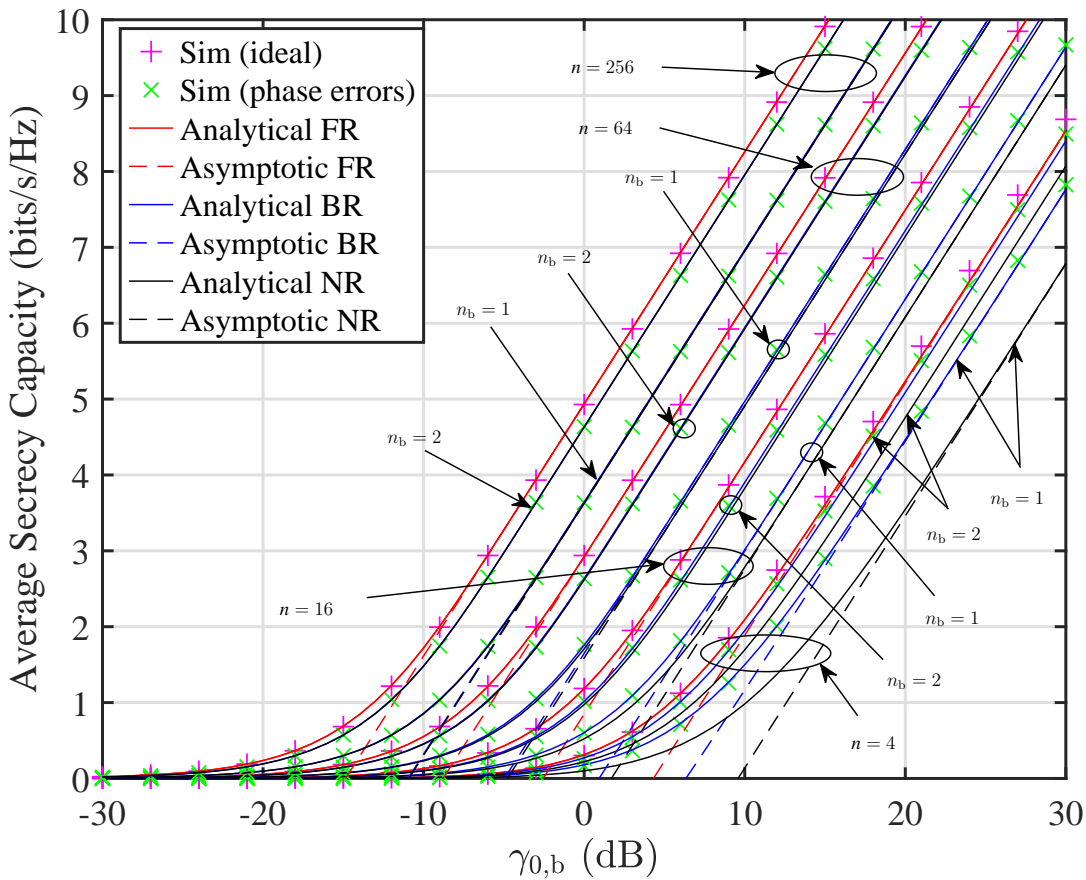}
\caption{ASC as a function of $\overline\gamma_{\rm0,b}$ for different values of $n_{\rm b}$ and $n$. Markers correspond to the legitimate and eavesdropper channels in \eqref{eq2} and \eqref{eq5}. }
\label{figASC1}
\vspace{-5mm}
\end{figure}
   \begin{figure}[t]
  \psfrag{Z}[Bc][Bc][0.6]{$\mathrm{\textit{n}=256}$}
\psfrag{Z1}[Bc][Bc][0.6]{$\mathrm{\textit{n}=64}$}
\psfrag{Z2}[Bc][Bc][0.6]{$\mathrm{\textit{n}=16}$}
\psfrag{Z3}[Bc][Bc][0.6]{$\mathrm{\textit{n}=4}$}       \includegraphics[width=\linewidth]{./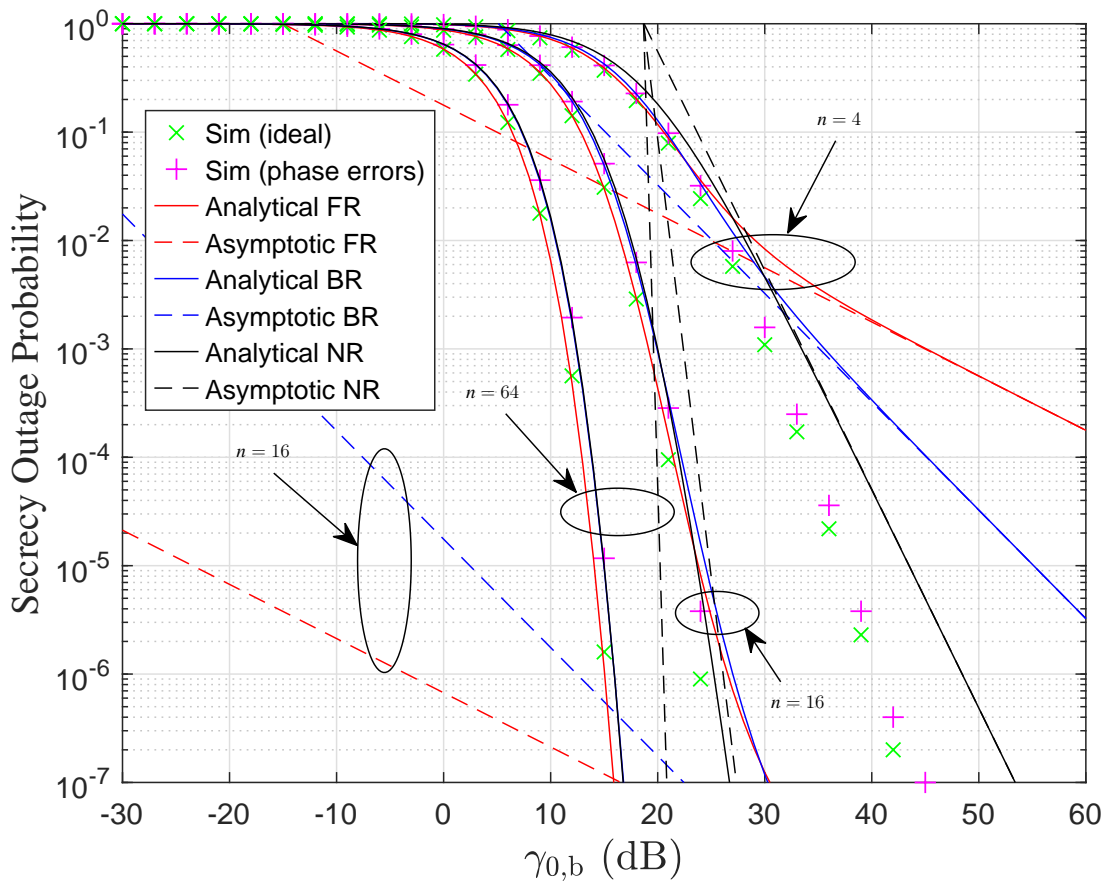}
     \caption{SOP as a function of $\overline\gamma_{\rm0,b}$ for different values of $n$ with $n_{\rm b}=2$ and $n_{\rm b}\rightarrow\infty$. Markers correspond to the legitimate and eavesdropper channels in \eqref{eq2} and \eqref{eq5}.}
\label{figOP1}
\vspace{-5mm}
 \end{figure}
          
Fig. \ref{figOP1} shows the SOP as a function of $\overline\gamma_{\rm0,b}$, for different values of $n$ and $n_{\rm b}=\{2,\infty\}$. Theoretical values have been evaluated with the expressions included in Lemmas 1 to 3. Similar conclusions as in the ASC can be extracted, especially confirming that $n_{\rm b}=2$ bits allow for a good performance compared to the ideal case. However, some relevant differences are observed: while the equivalent scalar approximations work well in all instances for large $n$, there are substantial differences between the exact simulated results and the FR, BR, and NR cases for lower $n$. More importantly, the asymptotic results for the SOP may induce to confusion if not interpreted properly: while all asymptotic results are tight (i.e., they all coincide with the analytical SOP expressions for each case), the different secrecy diversity order inherent to each of the equivalent scalar approximations is translated into a different decay of the high-SNR slopes. Because of the high line-of-sight condition of the FR and BR scalar approximations, the asymptotes kick-in at very low SOP values; conversely, the NR asymptote seems to better capture the abrupt decay of the SOP for the operating range of probability values. In any case, asymptotic analyses for the SOP should be exercised with caution when using the equivalent scalar approximations, as they may not be representative of the actual behavior of the real \ac{LRS}-assisted channel.       

%%%%%%%%%%%%%%%%%%%%%%%%%%%%%%%
% Conclusions
%%%%%%%%%%%%%%%%%%%%%%%%%%%%%%%
\section{Conclusions}
The potential of \ac{LRS} for \ac{PLS} and the usefulness of equivalent scalar channel approximations for performance evaluation in such contexts have been exemplified, both theoretically and by simulation. Even when the \ac{LRS} has a limited phase resolution of $2$ bits, the different scaling laws for the desired and eavesdropper's SNRs allows for improving the \ac{PLS} performance in \ac{LRS}-assisted communications. The implications of using multiple antenna devices by all agents, and the potential impact of spatial correlation in the fading links are key aspects to be further investigated.

%%%%%%%%%%%%%%%%%%%%%%%%%%%%%%%
% Acknowledgements
%%%%%%%%%%%%%%%%%%%%%%%%%%%%%%%

\appendices
\vspace{-3mm}
\section{Proof of Theorem \ref{th1}}
\label{appendix1}
Using the law of total expectation, and conditioning on the set $Z=\{H_{i,1}, H_{i,b},\Theta_{i}\}$, we can write $\mathbb{E}\{H_{\rm b}H_{\rm e}\}=\mathbb{E}\{\mathbb{E}\left\{H_{\rm b}H_{\rm e}|Z\right\}\}=\mathbb{E}\{H_{\rm b}\mathbb{E}\left\{H_{\rm e}|Z\right\}\}$. Now, the inner expectation can be expanded as
\begin{align}
\mathbb{E}\left\{H_{\rm e}|Z\right\}&=\frac{1}{n}\sum_{i=1}^{n}|H_{i,1}|\mathbb{E}\left\{|H_{i,\rm{e}}|e^{j \Psi_{i}}\right\}.
\end{align}
Now, by virtue of \cite{Scire1968} the distribution of $\Psi_{i}$ is uniform in any interval of length $2\pi$ provided that $\angle H_{i,\rm{e}}$ is uniformly distributed in the same interval. Under the mild assumption that $|H_{i,\rm{e}}|$ and $e^{j \Psi_{i}}$ are independent, which is the case for instance of $|H_{i,\rm{e}}|$ being Rayleigh distributed, then it yields that $\mathbb{E}\left\{H_{\rm e}|Z\right\}=0$. Hence, the independence between $H_{\rm b}$ and $H_{\rm e}$ is stated. This completes the proof.

\section{Proof of Theorem \ref{th1}}
\label{appendix2}
From the definitions in \eqref{eq22} and \eqref{eq24}, we use the expression of the CDF of the exponential distribution for the wiretap link. After integration by parts, two terms are identified; the first one corresponds to $\overline{C}_\mathrm{E}$ in \eqref{ExactCs}, whereas the second one reduces to $\mathcal{G}_{\rm Z}\left(\overline\gamma_{\rm b},\overline\gamma_{\rm e}\right)$ after: (\emph{i}) leveraging the integral definition of the Exponential integral function in \cite{Gradshteyn} in \eqref{ExactCs}, (\emph{ii}) changing the order of integration, and (\emph{iii}) using the definition of the MGF. As for the asymptotic ASC results, they hold by virtue of \cite[eq. (43)]{Moualeu}.

\vspace{-2mm}

\bibliographystyle{ieeetr}
\bibliography{bibfile}

\end{document}